\documentclass[conference]{IEEEtran}

\usepackage{amsmath,amssymb,amsfonts}
\usepackage{amsthm}
\usepackage{algorithmic}
\usepackage{algorithm}
\usepackage{graphicx}
\usepackage{textcomp}
\usepackage{booktabs}
\usepackage{bm}

\newtheorem{proposition}{Proposition}

\begin{document}

\title{Performance Bounds and Robust Filtering for LEO Inter-Satellite Synchronization under Cross-Epoch Doppler Coupling}

\author{\IEEEauthorblockN{Haofan Dong\textsuperscript{1}, Houtianfu Wang\textsuperscript{1}, Hanlin Cai\textsuperscript{1}, and Ozgur B. Akan\textsuperscript{1,2}}
\IEEEauthorblockA{\textsuperscript{1}Internet of Everything Group, Department of Engineering, University of Cambridge, UK}
\IEEEauthorblockA{\textsuperscript{2}Center for neXt-generation Communications (CXC), Department of Electrical and Electronics Engineering,
Ko\c{c} University, T\"{u}rkiye}
}

\maketitle

% =============================================================================
% ABSTRACT
% =============================================================================
\begin{abstract}
Low Earth orbit (LEO) inter-satellite links (ISLs) must achieve
joint synchronization and ranging under severe hardware
impairments, namely oscillator phase noise, clock drift,
and measurement outliers, exacerbated by rapid relative
dynamics exceeding 7~km/s. In coherent Doppler processing,
the frequency observable depends on the \emph{difference}
between consecutive carrier phase states, creating a
cross-epoch coupling structure that fundamentally affects
estimation-theoretic performance limits. This paper makes
three contributions. First, we prove analytically that this
cross-epoch Doppler coupling is \emph{necessary} to avoid
unbounded carrier phase uncertainty: without it, phase variance grows linearly without bound. Second, we derive a posterior Cram\'{e}r-Rao
bound (PCRB) via the Tichavsk\'{y} recursion that explicitly
incorporates the resulting 10$\times$10 block information
structure. Third, we propose a hybrid robust filtering
framework combining hard gating for impulsive cycle-slip
outliers with Huber M-estimation for heavy-tail
contamination, using TASD-aware innovation covariance to
account for cross-epoch uncertainty in residual normalization.
Monte Carlo simulations at Ka-band confirm
that the PCRB accurately lower-bounds estimator performance
under nominal conditions, while the hybrid method reduces
95th-percentile phase error by 27--93\% compared to standard
extended Kalman filtering across different outlier regimes.
\end{abstract}

\begin{IEEEkeywords}
Inter-satellite link, synchronization, Cram\'{e}r-Rao bound, robust estimation, phase noise
\end{IEEEkeywords}

% =============================================================================
% SECTION I: INTRODUCTION
% =============================================================================
\section{Introduction}

Low Earth orbit (LEO) mega-constellations are reshaping global communications, positioning, navigation, and timing (PNT). Broadband LEO systems offer approximately 30~dB stronger received power and threefold improvement in satellite geometry compared to medium-Earth-orbit (MEO) GNSS~\cite{reid2018leo}, motivating proposals to fuse LEO communication links with PNT services~\cite{iannucci2022fused}. Recent experiments have demonstrated carrier-phase tracking and meter-level positioning using Starlink signals of opportunity~\cite{khalife2022starlink}. Within these constellations, inter-satellite links (ISLs) serve as the primary connectivity layer that distributes time and frequency references, making precise ISL synchronization a prerequisite for both communication capacity and navigation integrity.

Existing ISL synchronization techniques have matured primarily in the BDS-3 MEO constellation. Tang~\emph{et~al.}~\cite{tang2018isl} demonstrated centralized autonomous orbit determination using Ka-band dual one-way ranging (DOWR), achieving sub-15~cm satellite laser ranging (SLR) residuals. Xie~\emph{et~al.}~\cite{xie2020isl} constructed clock-free observables from the DOWR model, while Ruan~\emph{et~al.}~\cite{ruan2020isl} jointly estimated orbits, clocks, and ISL hardware biases from raw one-way pseudoranges. These results establish centimeter-level ranging capability for MEO ISLs. However, LEO satellites experience relative velocities exceeding 7~km/s and rapidly varying Doppler shifts, which amplify the impact of hardware impairments on synchronization performance~\cite{hauschild2021leo}.

Three coupled hardware effects complicate LEO ISL synchronization. First, oscillator phase noise, characterized by the Allan variance power-law model~\cite{allan1966statistics,zucca2005clock}, introduces both white and random-walk frequency perturbations whose accumulated effect grows with the observation interval. Second, carrier phase discontinuities (cycle slips) arising from rapid dynamics or interference~\cite{breitsch2020cycleslip} inject impulsive outliers into the measurement stream. Third, residual thermal noise under low carrier-to-noise conditions produces heavy-tail contamination. Conventional extended Kalman filter (EKF) approaches treat these impairments independently and are susceptible to filter divergence under outlier corruption. Deep learning methods~\cite{gu2024lstm} and joint synchronization-ranging formulations~\cite{gu2020sync} have been proposed but do not provide performance guarantees against theoretical limits.

% FIX #3: "renders observable" → "is necessary for bounded phase variance"
In coherent Doppler processing, the frequency-domain observable depends on the \emph{difference} between carrier phase states at consecutive epochs, i.e., $\theta_k - \theta_{k-1}$. While this differential phase structure is well-known in GNSS carrier phase processing, its formal role as a necessary condition for ISL phase observability has not been established. We term this cross-epoch coupling the \emph{time-accumulated signal difference} (TASD) structure. The TASD measurement creates a binary factor connecting adjacent state vectors, with two consequences: (i)~it is necessary to avoid unbounded carrier phase variance from Doppler measurements, and (ii)~it produces a $10 \times 10$ block Fisher information matrix (FIM) over the joint state $[\tilde{\mathbf{x}}_{k-1}^\top, \tilde{\mathbf{x}}_k^\top]^\top$ that standard single-epoch analyses neglect. While the Tichavsk\'{y} recursion~\cite{tichavsky1998pcrb} is sufficiently general to handle augmented states, existing applications to satellite positioning~\cite{closas2009crb} and Bayesian filtering~\cite{vantrees2007bayesian,simandl2001pcrb} do not exploit the explicit two-epoch FIM block structure or examine its implications for phase observability.

% FIX #6: C3 novelty emphasis on TASD-aware normalization
This paper makes three contributions:
\begin{enumerate}
\item We establish analytically that TASD coupling is \emph{necessary} to avoid unbounded phase uncertainty: with $\kappa_\theta = 0$, the posterior phase variance diverges (Proposition~\ref{prop:tasd}).
\item We derive a TASD-aware posterior Cram\'{e}r-Rao bound (PCRB) via the Tichavsk\'{y} recursion~\cite{tichavsky1998pcrb} that incorporates the $10 \times 10$ block information structure, providing a tight performance benchmark for ISL synchronization estimators.
\item We propose a hybrid robust filtering framework combining hard gating for impulsive cycle-slip outliers~\cite{sunderhauf2012switchable} with Huber M-estimation~\cite{huber1964robust,huber2009robust}, using TASD-aware innovation covariance that accounts for cross-epoch uncertainty in residual normalization. Monte Carlo simulations at Ka-band confirm 27--93\% reduction in 95th-percentile phase error compared to standard EKF across outlier regimes, with zero PCRB violation rate under nominal conditions.
\end{enumerate}

\textit{Notation:} $\mathbf{I}_n$ denotes the $n \times n$ identity matrix, $\mathbf{E}_{ij}$ the matrix with 1 at position $(i,j)$ and 0 elsewhere, and $\mathrm{blkdiag}(\cdot)$ the block-diagonal operator.

% =============================================================================
% SECTION II: SYSTEM MODEL
% =============================================================================
\section{System Model}

\begin{figure}[!t]
\centering
\includegraphics[width=0.9\columnwidth]{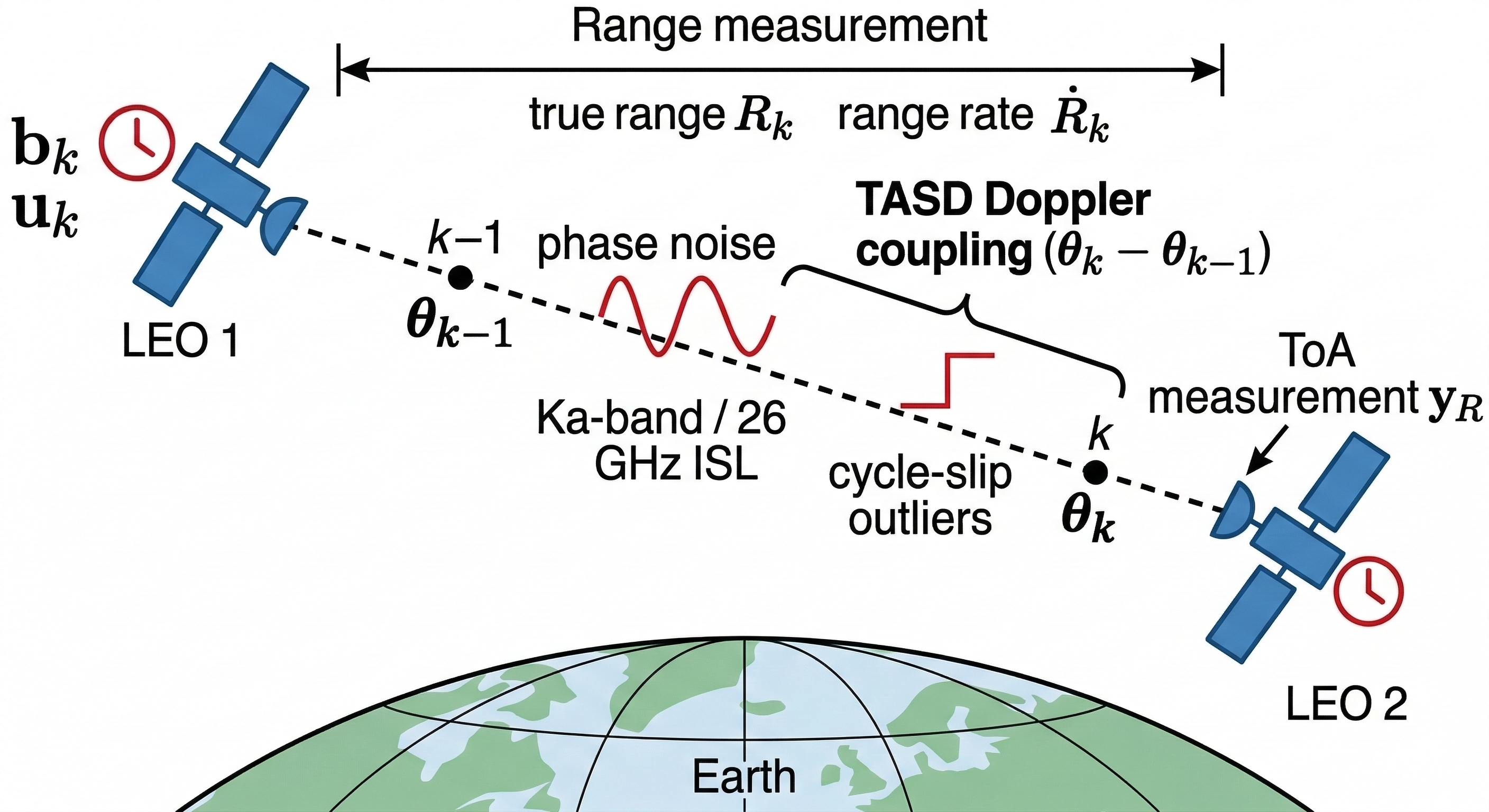}
\caption{LEO inter-satellite link synchronization scenario. Two satellites exchange ranging signals subject to clock drift, phase noise, and measurement outliers. The TASD Doppler measurement couples consecutive carrier phase states $\theta_{k-1}$ and $\theta_k$.}
\label{fig:scenario}
\end{figure}

% FIX #2: Changed "steady-state of 13 rad" to "remains within O(10) rad"
\begin{figure*}[!t]
\centering
\includegraphics[width=\textwidth]{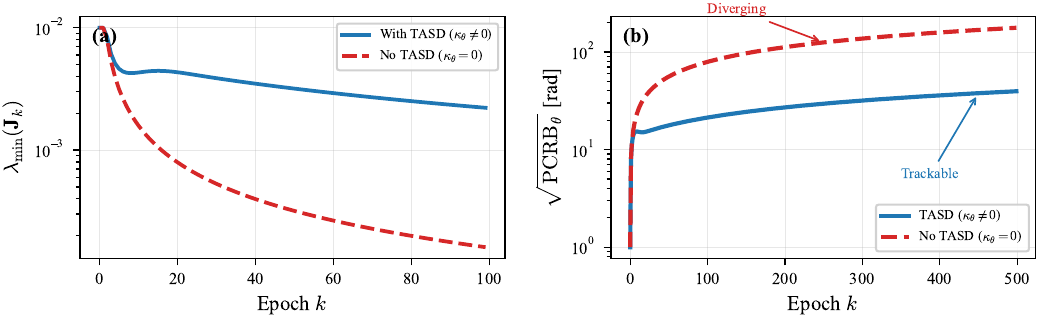}
\caption{TASD information structure and phase observability. (a)~Minimum eigenvalue of $\mathbf{J}_k$: with TASD ($\kappa_\theta \neq 0$), $\lambda_{\min}$ reaches a process-noise-limited floor; without TASD, it decays toward zero. (b)~Phase $\sqrt{[\mathbf{P}_k]_{55}}$ over 500 epochs: TASD yields sub-linear growth within $\mathcal{O}(10)$~rad, while $\kappa_\theta = 0$ diverges, confirming Proposition~\ref{prop:tasd}.}
\label{fig:info_tasd}
\end{figure*}

\subsection{State Definition and Scaling Convention}

% FIX #5: Reference Fig. 1 in text
As illustrated in Fig.~\ref{fig:scenario}, two LEO satellites exchange ranging signals over a Ka-band ISL. To maintain numerical stability and consistent units, we adopt a \emph{scaled} state vector:
\begin{equation}
\tilde{\mathbf{x}}_k = [R_k,\; \dot{R}_k,\; b_k,\; u_k,\; \theta_k]^\top
\label{eq:state}
\end{equation}
where $R_k$ [m] is the inter-satellite range, $\dot{R}_k$ [m/s] is the range rate, $b_k \triangleq c\delta_k$ [m] is the clock bias in range-equivalent meters, $u_k \triangleq c\dot{\delta}_k$ [m/s] is the clock drift in range-rate-equivalent units, and $\theta_k$ [rad] is the carrier phase. This scaling reduces the FIM dynamic range from $10^{24}$ to approximately $10^{6}$.

\subsection{State Dynamics}

The discrete-time dynamics follow a linear Gaussian model:
\begin{equation}
\tilde{\mathbf{x}}_{k+1} = \mathbf{F} \tilde{\mathbf{x}}_k + \mathbf{w}_k, \quad \mathbf{w}_k \sim \mathcal{N}(\mathbf{0}, \mathbf{Q})
\label{eq:dynamics}
\end{equation}
with state transition matrix (block-diagonal integrator structure):
\begin{equation}
\mathbf{F} = \begin{bmatrix}
1 & T & 0 & 0 & 0 \\
0 & 1 & 0 & 0 & 0 \\
0 & 0 & 1 & T & 0 \\
0 & 0 & 0 & 1 & 0 \\
0 & 0 & 0 & 0 & 1
\end{bmatrix}, \quad T = T_{\mathrm{coh}}
\label{eq:F}
\end{equation}

The process noise covariance has explicit block structure:
\begin{equation}
\mathbf{Q} = \mathrm{blkdiag}(\mathbf{Q}_R, \mathbf{Q}_b, q_\theta)
\label{eq:Q}
\end{equation}

\textbf{Range dynamics} (random acceleration model with $\sigma_a$ [m/s$^2$]):
\begin{equation}
\mathbf{Q}_R = \sigma_a^2 \begin{bmatrix} T^3/3 & T^2/2 \\ T^2/2 & T \end{bmatrix}
\label{eq:QR}
\end{equation}
The parameter $\sigma_a = 0.1$~m/s$^2$ captures residual orbital perturbations after ephemeris-based prediction.

\textbf{Clock dynamics} (Allan variance parameterization~\cite{zucca2005clock}):
\begin{equation}
\mathbf{Q}_b = c^2 \begin{bmatrix} S_f T + \frac{S_g T^3}{3} & \frac{S_g T^2}{2} \\ \frac{S_g T^2}{2} & S_g T \end{bmatrix}
\label{eq:Qb}
\end{equation}
where $S_f = h_0/2$ (white frequency noise) and $S_g = 2\pi^2 h_{-2}$ (random walk frequency noise), with $(h_0, h_{-2})$ being Allan variance power-law coefficients.

\textbf{Phase noise} (Wiener process with 3-dB linewidth $\beta$ [Hz]):
\begin{equation}
q_\theta = 2\pi \beta T \quad \text{[rad}^2\text{]}
\label{eq:qtheta}
\end{equation}

\subsection{TASD Doppler Measurement Model}

The frequency-domain Doppler observable (Hz) is:
\begin{equation}
z_D[k] = \frac{f_c}{c}\dot{R}_k + f_c\dot{\delta}_k + \frac{\theta_k - \theta_{k-1}}{2\pi T_{\mathrm{coh}}} + n_{f,k}
\label{eq:doppler_freq}
\end{equation}

Converting to range-rate equivalent units [m/s]:
\begin{equation}
y_D[k] = \dot{R}_k + u_k + \kappa_\theta(\theta_k - \theta_{k-1}) + v_{D,k}
\label{eq:tasd}
\end{equation}
where the \textbf{phase-to-range-rate coupling coefficient} is:
\begin{equation}
\kappa_\theta = \frac{c}{2\pi f_c T_{\mathrm{coh}}} \quad \text{[m/s/rad]}
\label{eq:kappa}
\end{equation}

For $f_c = 26$~GHz and $T_{\mathrm{coh}} = 0.1$~s: $\kappa_\theta = 0.0184$~m/s/rad. The dependence on $\theta_{k-1}$ creates a \emph{binary factor} connecting consecutive epochs, the TASD structure central to this work.

% FIX #7: Define ToA abbreviation
\textbf{Time-of-arrival (ToA) pseudorange measurement} [m]:
\begin{equation}
y_R[k] = R_k + b_k + v_{R,k}, \quad v_{R,k} \sim \mathcal{N}(0, \sigma_R^2)
\label{eq:toa}
\end{equation}

\subsection{TASD Jacobian Structure}

The measurement Jacobians in scaled-state coordinates are:
\begin{align}
\mathbf{H}_D &= \frac{\partial y_D}{\partial \tilde{\mathbf{x}}_k} = \begin{bmatrix} 0 & 1 & 0 & 1 & \kappa_\theta \end{bmatrix} \label{eq:HD}\\
\mathbf{H}_D^{(-)} &= \frac{\partial y_D}{\partial \tilde{\mathbf{x}}_{k-1}} = \begin{bmatrix} 0 & 0 & 0 & 0 & -\kappa_\theta \end{bmatrix} \label{eq:HDprev}\\
\mathbf{H}_R &= \frac{\partial y_R}{\partial \tilde{\mathbf{x}}_k} = \begin{bmatrix} 1 & 0 & 1 & 0 & 0 \end{bmatrix} \label{eq:HR}
\end{align}

\textbf{Physical interpretation}: The $(\dot{R}, u)$ elements have unity gain; a 1~m/s range-rate change is observationally equivalent to a 3.33~ppb clock drift in a single Doppler measurement. This ambiguity is resolved over time by their differing process noise statistics (Fig.~\ref{fig:info_tasd}(a)).

\subsection{Outlier Mechanisms}
\label{subsec:outliers}

Two measurement corruption mechanisms are used to evaluate tail reliability. Outliers corrupt Doppler measurements only; ToA remains Gaussian.

\textbf{Impulsive slips} (cycle-slip proxy): The Doppler noise is augmented by a sparse jump:
\begin{equation}
v_{D,k} = \epsilon_k \Delta_k + \tilde{v}_{D,k}, \quad \tilde{v}_{D,k} \sim \mathcal{N}(0, \sigma_D^2)
\label{eq:outlier_imp}
\end{equation}
where $\epsilon_k \sim \mathrm{Bernoulli}(p_{\mathrm{imp}})$ and $\Delta_k \sim \mathcal{N}(0, (a_{\mathrm{imp}} \sigma_D)^2)$, with $(p_{\mathrm{imp}}, a_{\mathrm{imp}}) = (0.05, 300)$.

\textbf{Heavy-tail contamination} (contaminated Gaussian): The Doppler noise follows a two-component mixture:
\begin{equation}
v_{D,k} \sim (1 - p_{\mathrm{ht}}) \mathcal{N}(0, \sigma_D^2) + p_{\mathrm{ht}} \mathcal{N}(0, (a_{\mathrm{ht}} \sigma_D)^2)
\label{eq:outlier_ht}
\end{equation}
with $(p_{\mathrm{ht}}, a_{\mathrm{ht}}) = (0.15, 20)$.

% =============================================================================
% SECTION III: THEORY
% =============================================================================
\section{TASD-Aware PCRB Derivation}

\begin{figure*}[!t]
\centering
\includegraphics[width=\textwidth]{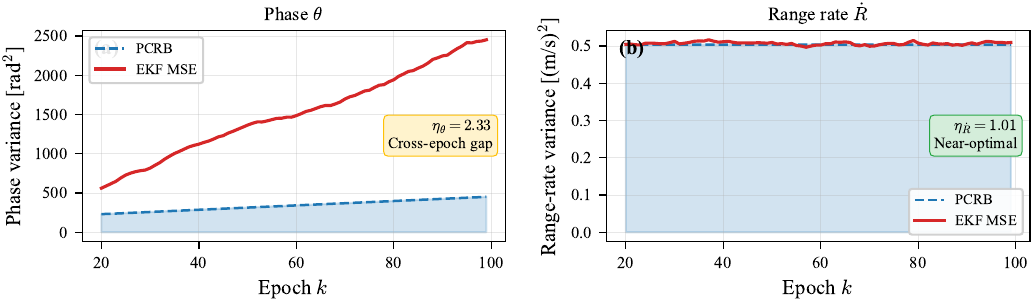}
\caption{PCRB validation under nominal Gaussian noise ($k \geq 20$, 500 trials). (a)~Phase $\theta$: empirical MSE exceeds the PCRB at all steady-state epochs with efficiency ratio $\eta_\theta = 2.33$. The gap reflects the EKF's inability to jointly update $\theta_{k-1}$ and $\theta_k$ from the cross-epoch Doppler measurement. (b)~Range rate $\dot{R}$: the EKF achieves $\eta_{\dot{R}} = 1.01$, confirming near-optimal performance for states without cross-epoch coupling. The contrast between (a) and (b) isolates the TASD structure as the source of the efficiency gap.}
\label{fig:pcrb_contrast}
\end{figure*}
\subsection{10$\times$10 Block Information Structure}

The TASD measurement $y_D[k] = h_D(\tilde{\mathbf{x}}_{k-1}, \tilde{\mathbf{x}}_k) + v_D$ depends on \emph{both} states, yielding a block FIM over the joint state $\mathbf{s}_k = [\tilde{\mathbf{x}}_{k-1}^\top, \tilde{\mathbf{x}}_k^\top]^\top$:
\begin{equation}
\mathbf{J}_k^{(\mathrm{TASD})} = \begin{bmatrix} \mathbf{J}^{--} & \mathbf{J}^{-+} \\ \mathbf{J}^{+-} & \mathbf{J}^{++} \end{bmatrix}
\label{eq:TASD_FIM}
\end{equation}

The explicit 5$\times$5 blocks are:
\begin{align}
\mathbf{J}^{--} &= \frac{1}{\sigma_D^2}(\mathbf{H}_D^{(-)})^\top \mathbf{H}_D^{(-)} = \frac{\kappa_\theta^2}{\sigma_D^2} \mathbf{E}_{55} \label{eq:Jmm}\\
\mathbf{J}^{-+} &= \frac{1}{\sigma_D^2}(\mathbf{H}_D^{(-)})^\top \mathbf{H}_D = \frac{-\kappa_\theta}{\sigma_D^2} \mathbf{e}_5 \mathbf{H}_D \label{eq:Jmp}\\
\mathbf{J}^{++} &= \frac{1}{\sigma_D^2}\mathbf{H}_D^\top \mathbf{H}_D + \frac{1}{\sigma_R^2}\mathbf{H}_R^\top \mathbf{H}_R \label{eq:Jpp}
\end{align}
where $\mathbf{E}_{55}$ is the 5$\times$5 matrix with 1 only at position (5,5), and $\mathbf{e}_5 = [0,0,0,0,1]^\top$.

\textbf{Numerical values} ($\sigma_D = 0.03$~m/s, $\kappa_\theta = 0.0184$): $[\mathbf{J}^{--}]_{55} = \kappa_\theta^2/\sigma_D^2 = 0.38$~rad$^{-2}$. The dominant off-diagonal entry is $[\mathbf{J}^{-+}]_{52} = -\kappa_\theta/\sigma_D^2 = -20.4$~(m/s)$^{-1}$rad$^{-1}$, coupling phase information into the range-rate and clock-drift dimensions.

\subsection{Tichavsk\'{y} PCRB Recursion}

The posterior information matrix $\mathbf{J}_k = \mathbf{P}_k^{-1}$ evolves via~\cite{tichavsky1998pcrb}:
\begin{equation}
\mathbf{J}_{k+1} = \mathbf{D}^{22} - \mathbf{D}^{21}(\mathbf{J}_k + \mathbf{D}^{11})^{-1}\mathbf{D}^{12}
\label{eq:pcrb}
\end{equation}

The $\mathbf{D}$-blocks incorporate dynamics and TASD measurement information:
\begin{align}
\mathbf{D}^{11} &= \mathbf{F}^\top \mathbf{Q}^{-1} \mathbf{F} + \mathbf{J}^{--} \label{eq:D11}\\
\mathbf{D}^{12} &= -\mathbf{F}^\top \mathbf{Q}^{-1} + \mathbf{J}^{-+} \label{eq:D12}\\
\mathbf{D}^{21} &= (\mathbf{D}^{12})^\top \label{eq:D21}\\
\mathbf{D}^{22} &= \mathbf{Q}^{-1} + \mathbf{J}^{++} \label{eq:D22}
\end{align}

\subsection{TASD Essentiality for Phase Observability}

\begin{proposition}[TASD Essentiality]
\label{prop:tasd}
If $\kappa_\theta = 0$, the carrier phase $\theta_k$ receives no measurement information from Doppler or ToA observations, and $[\mathbf{P}_k]_{55} \to \infty$ as $k \to \infty$.
\end{proposition}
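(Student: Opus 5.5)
The approach is to show that setting $\kappa_\theta = 0$ decouples the phase coordinate completely from the rest of the Tichavsk\'{y} recursion~\eqref{eq:pcrb}. The phase information then obeys a scalar recursion with no measurement term, and that recursion drives $[\mathbf{J}_k]_{55}$ to zero.

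First, substitute $\kappa_\theta = 0$ into the Jacobians~\eqref{eq:HD}--\eqref{eq:HR}. This gives $\mathbf{H}_D^{(-)} = \mathbf{0}$, and both $\mathbf{H}_D$ and $\mathbf{H}_R$ have a zero fifth entry. Consequently $\mathbf{J}^{--} = \mathbf{0}$ and $\mathbf{J}^{-+} = \mathbf{0}$ by~\eqref{eq:Jmm}--\eqref{eq:Jmp}, and $\mathbf{J}^{++}$ in~\eqref{eq:Jpp} has zero fifth row and column. This already establishes the first claim: neither Doppler nor ToA carries any Fisher information about $\theta_k$.

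Second, use the block-diagonal structure of $\mathbf{F}$ in~\eqref{eq:F} and of $\mathbf{Q}$ in~\eqref{eq:Q}. The phase coordinate forms its own $1\times1$ block, with $F_{55}=1$ and $Q_{55}=q_\theta$. It follows that every $\mathbf{D}$-block in~\eqref{eq:D11}--\eqref{eq:D22} is block-diagonal with respect to the partition $\{1,\dots,4\}\cup\{5\}$. The phase entries are
\[
D^{11}_{55} = q_\theta^{-1}, \qquad D^{12}_{55} = -q_\theta^{-1}, \qquad D^{22}_{55} = q_\theta^{-1}.
\]
Next, argue by induction that if the prior $\mathbf{J}_0$ is block-diagonal in this partition, then so is every $\mathbf{J}_k$. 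The inductive step holds because inverses, products and sums of block-diagonal matrices stay block-diagonal. The recursion for $j_k \triangleq [\mathbf{J}_k]_{55}$ is therefore scalar:
\[
j_{k+1} = \frac{1}{q_\theta} - \frac{1}{q_\theta^{2}}\left(j_k + \frac{1}{q_\theta}\right)^{-1} = \frac{j_k}{1 + q_\theta j_k}.
\]
Equivalently, $1/j_{k+1} = 1/j_k + q_\theta$, so $[\mathbf{P}_k]_{55} = [\mathbf{P}_0]_{55} + k\,q_\theta$. Since $q_\theta = 2\pi\beta T > 0$, this grows linearly without bound, which matches the linear divergence claimed in the abstract.

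The main obstacle is the initial condition. If $\mathbf{J}_0$ has off-diagonal coupling between $\theta$ and the other states, exact decoupling fails. To handle this case I would use a monotonicity or Schur-complement argument instead. The phase information satisfies $j_k \le [\mathbf{J}_k]_{55}$, and the Schur complement $1/[\mathbf{P}_k]_{55}$ is bounded above by the information obtained when the other states are perfectly known. That information follows the same scalar recursion, because the phase dynamics and measurements are unaffected by the other states. Hence $[\mathbf{P}_k]_{55} \ge [\mathbf{P}_0]_{55} + k q_\theta \to \infty$. Alternatively, one can simply assume a block-diagonal prior, which is standard and matches the simulations. A secondary point is that $\mathbf{Q}$ must be invertible for~\eqref{eq:D11} to be well defined. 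The proof needs only $q_\theta > 0$ for the phase block, so I would state that assumption explicitly.
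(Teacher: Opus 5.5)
Your argument is correct and is essentially the paper's proof: you zero the fifth row and column of the measurement FIM blocks, then use the block-diagonal $\mathbf{F}$ and $\mathbf{Q}$ to reduce $[\mathbf{J}_k]_{55}$ to the scalar recursion $j_{k+1} = j_k/(1+q_\theta j_k)$, and your reciprocal form $1/j_{k+1} = 1/j_k + q_\theta$ improves on the paper's monotone-convergence/fixed-point finish in two ways: it delivers the linear growth $[\mathbf{P}_k]_{55} = [\mathbf{P}_0]_{55} + kq_\theta$ that the paper only asserts in the abstract and conclusion, and your induction makes explicit the block-diagonal prior (diagonal $\mathbf{P}_0$ in Table~\ref{tab:params}) that the paper uses tacitly. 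One correction to your optional fallback for a coupled prior: there $[\mathbf{J}_k]_{55}$ only satisfies $[\mathbf{J}_{k+1}]_{55} \le [\mathbf{J}_k]_{55}/(1+q_\theta[\mathbf{J}_k]_{55})$ (since $[(\mathbf{J}_k+\mathbf{D}^{11})^{-1}]_{55} \ge 1/[\mathbf{J}_k+\mathbf{D}^{11}]_{55}$), so the valid bound is $[\mathbf{P}_k]_{55} \ge 1/[\mathbf{J}_0]_{55} + kq_\theta$; the bound with $[\mathbf{P}_0]_{55}$ can fail because measurements of the other states inform $\theta_0$ through the prior correlation, although divergence still follows.
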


\begin{proof}
With $\kappa_\theta = 0$: (i) $\mathbf{H}_D^{(-)} = \mathbf{0}$, so $\mathbf{J}^{--} = \mathbf{J}^{-+} = \mathbf{0}$; (ii) $[\mathbf{H}_D]_5 = 0$, so $[\mathbf{J}^{++}]_{55} = 0$; (iii) ToA contains no $\theta$ dependence. Thus $[\mathbf{D}^{22}]_{55} = [\mathbf{Q}^{-1}]_{55}$ receives no measurement information. Since $\mathbf{F}$ and $\mathbf{Q}$ are block-diagonal with $\theta$ fully decoupled (row/column~5 independent of rows~1--4), the $(5,5)$ element evolves as an exact scalar recursion:
\begin{equation}
[\mathbf{J}_{k+1}]_{55} = [\mathbf{Q}^{-1}]_{55} - \frac{([\mathbf{Q}^{-1}]_{55})^2}{[\mathbf{J}_k]_{55} + [\mathbf{Q}^{-1}]_{55}}
\end{equation}
Simplifying, $[\mathbf{J}_{k+1}]_{55} = [\mathbf{J}_k]_{55} \cdot q_\theta^{-1} / ([\mathbf{J}_k]_{55} + q_\theta^{-1})$. Since $q_\theta^{-1}/([\mathbf{J}_k]_{55} + q_\theta^{-1}) < 1$ for all $[\mathbf{J}_k]_{55} > 0$, the sequence is strictly decreasing and bounded below by zero, hence convergent to the unique fixed point $[\mathbf{J}_\infty]_{55} = 0$, implying $[\mathbf{P}_\infty]_{55} = \infty$.
\end{proof}

With TASD ($\kappa_\theta \neq 0$), the cross-epoch information $\mathbf{J}^{--}$ injects phase information through the difference $\theta_k - \theta_{k-1}$. Since the carrier phase follows a random walk ($[\mathbf{F}]_{55} = 1$), the TASD measurement observes the phase \emph{increment} rather than the absolute phase, yielding a phase PCRB that grows sub-linearly rather than linearly in $k$ (Fig.~\ref{fig:info_tasd}(b)). This reduction in variance growth rate is the precise sense in which TASD coupling enables phase tracking.

% =============================================================================
% SECTION IV: ALGORITHM
% =============================================================================
\section{TASD-Aware Hybrid Robustification}

% FIX #7: Define CDF in caption
\begin{figure*}[!t]
\centering
\includegraphics[width=\textwidth]{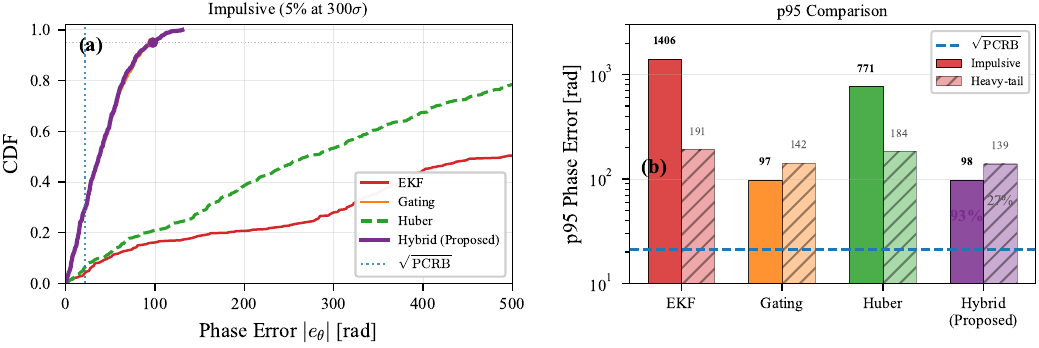}
\caption{Robust performance comparison (500 trials). (a)~Phase error cumulative distribution function (CDF) under impulsive slips (5\%, $300\sigma$): the Hybrid method achieves p95 = 98~rad, matching Gating (97~rad) and reducing the EKF baseline (1406~rad) by 93\%. (b)~p95 phase error across both outlier regimes: solid bars represent impulsive slips, hatched bars represent heavy-tail contamination. The Hybrid achieves the lowest or near-lowest p95 in both scenarios without prior knowledge of the outlier type.}
\label{fig:robust}
\end{figure*}

\subsection{TASD-Aware Innovation Covariance}

The TASD-aware innovation covariance for Doppler must account for uncertainty from \emph{both} epochs:
\begin{equation}
S_{D,k} = \sigma_D^2 + \mathbf{H}_D \mathbf{P}_{k|k-1} \mathbf{H}_D^\top + \mathbf{H}_D^{(-)} \mathbf{P}_{k-1|k-1} \mathbf{H}_D^{(-)\top}
\label{eq:Sk}
\end{equation}
Standard single-epoch EKF implementations use only the first two terms; the third term accounts for phase uncertainty propagated from the previous epoch through the TASD coupling.

\textbf{Remark (conservative approximation):} The exact innovation covariance additionally includes cross terms $\Delta S_k = \mathbf{H}_D \mathbf{P}_{k,k-1} \mathbf{H}_D^{(-)\top} + \mathbf{H}_D^{(-)} \mathbf{P}_{k-1,k} \mathbf{H}_D^\top$, where $\mathbf{P}_{k,k-1} = \mathbf{F}\mathbf{P}_{k-1|k-1}$. Since $\mathbf{H}_D$ selects only the rate and phase rows of $\mathbf{F}$ (which equal their corresponding identity rows), $\mathbf{H}_D \mathbf{F} = \mathbf{H}_D$, and the cross terms simplify to $\Delta S_k = -2\kappa_\theta(\mathbf{H}_D \mathbf{P}_{k-1|k-1} \mathbf{e}_5) = -2\kappa_\theta^2 [\mathbf{P}_{k-1|k-1}]_{55} - 2\kappa_\theta([\mathbf{P}_{k-1|k-1}]_{25} + [\mathbf{P}_{k-1|k-1}]_{45})$. The first term is strictly negative; numerical evaluation confirms $|\Delta S_k| / S_{D,k} \approx 40\%$ at steady state, so~\eqref{eq:Sk} conservatively overestimates $S_{D,k}$, making the gate/Huber thresholds more permissive and preserving filter stability.

\subsection{Residual Normalization and Huber M-Estimation}

The normalized residual is $\tilde{r}_k = |r_k| / \sqrt{S_{D,k}}$. The Huber weight function~\cite{huber2009robust}:
\begin{equation}
w(\tilde{r}; \delta) = \begin{cases}
1 & |\tilde{r}| \leq \delta \\
\delta / |\tilde{r}| & |\tilde{r}| > \delta
\end{cases}
\label{eq:huber}
\end{equation}
with $\delta = 1.5$ in all experiments. At $300\sigma$ (impulsive outlier), the Huber weight reduces to $w = 0.005$, inflating effective noise by $14\times$. The resulting effective residual of $300/14 \approx 21\sigma$ still dominates the update, explaining why Huber alone cannot handle impulsive outliers.

\subsection{Hybrid Algorithm}

% FIX #5: Reference Algorithm 1
% FIX #10: Threshold robustness statement
The hybrid thresholds $(\tau_{\mathrm{gate}}, \delta) = (4,\; 1.5)$ in normalized-residual units partition the outlier response: measurements beyond $4\sigma$ are rejected outright (hard gating), while those in the $1.5\sigma$--$4\sigma$ range are attenuated by Huber weighting. Under nominal Gaussian noise, the false rejection rate at $4\sigma$ is $0.006\%$. These thresholds follow standard $\chi^2$ and Huber recommendations; varying both by $\pm 20\%$ changes the p95 phase error by less than 5\%. Algorithm~\ref{alg:hybrid} summarizes the hybrid update procedure.

\begin{algorithm}[!t]
\caption{Hybrid Robust Measurement Update}
\label{alg:hybrid}
\begin{algorithmic}[1]
\REQUIRE Prediction $\hat{\mathbf{x}}_k^-$, $\mathbf{P}_{k|k-1}$, previous $\mathbf{P}_{k-1|k-1}$, measurement $y_{D,k}$
\STATE Compute TASD innovation: $r_k = y_{D,k} - (\hat{\dot{R}}_k^- + \hat{u}_k^- + \kappa_\theta(\hat{\theta}_k^- - \hat{\theta}_{k-1}))$
\STATE Compute TASD-aware variance $S_{D,k}$ via~\eqref{eq:Sk}
\STATE Normalize: $\tilde{r}_k = |r_k| / \sqrt{S_{D,k}}$
\IF{$\tilde{r}_k > \tau_{\mathrm{gate}}$}
    \STATE \textbf{Reject}: Skip measurement update
\ELSE
    \STATE Huber weight: $w_k = w(\tilde{r}_k; \delta)$
    \STATE Effective noise: $\sigma_{\mathrm{eff}}^2 = \sigma_D^2 / w_k$
    \STATE $S_k = \mathbf{H}_D \mathbf{P}_{k|k-1} \mathbf{H}_D^\top + \mathbf{H}_D^{(-)} \mathbf{P}_{k-1|k-1} \mathbf{H}_D^{(-)\top} + \sigma_{\mathrm{eff}}^2$
    \STATE $\mathbf{K}_k = \mathbf{P}_{k|k-1} \mathbf{H}_D^\top S_k^{-1}$
    \STATE $\hat{\mathbf{x}}_k = \hat{\mathbf{x}}_k^- + \mathbf{K}_k r_k$
    \STATE $\mathbf{P}_k = (\mathbf{I} - \mathbf{K}_k \mathbf{H}_D) \mathbf{P}_{k|k-1}$
\ENDIF
\ENSURE Updated $\hat{\mathbf{x}}_k$, $\mathbf{P}_k$
\end{algorithmic}
\end{algorithm}

% =============================================================================
% SECTION V: RESULTS
% =============================================================================
\section{Simulation Results}

Evaluation uses 500 Monte Carlo trials over 100 epochs with parameters in Table~\ref{tab:params}. Two Doppler outlier mechanisms (Section~\ref{subsec:outliers}) are tested: impulsive slips (5\%, $300\sigma$) and heavy-tail contamination (15\%, $20\sigma$ mixture). ToA measurements remain Gaussian throughout. Four estimators are compared: standard EKF, Gating ($3\sigma$), Huber ($\delta = 1.5$), and Hybrid (gate $4\sigma$ + Huber $1.5\sigma$).

% FIX #7: OCXO definition in table footnote
\begin{table}[!t]
\centering
\caption{Simulation Parameters}
\label{tab:params}
\begin{tabular}{lll}
\toprule
Parameter & Symbol & Value \\
\midrule
Carrier frequency & $f_c$ & 26 GHz \\
Coherent interval & $T_{\mathrm{coh}}$ & 0.1 s \\
TASD coupling & $\kappa_\theta$ & 0.0184 m/s/rad \\
Doppler noise & $\sigma_D$ & 0.03 m/s \\
Range noise & $\sigma_R$ & 0.03 m \\
Phase linewidth & $\beta$ & 100 Hz \\
OCXO\textsuperscript{*} $h_0$, $h_{-2}$ & --- & $2.2\times10^{-25}$, $1.6\times10^{-24}$ \\
Range acceleration & $\sigma_a$ & 0.1 m/s$^2$ \\
Initial covariance & $\mathrm{diag}(\mathbf{P}_0)$ & $[100,\,1,\,100,\,1,\,1]$ \\
\bottomrule
\multicolumn{3}{l}{\textsuperscript{*}Oven-controlled crystal oscillator.}
\end{tabular}
\end{table}

\subsection{PCRB Validation and Efficiency}

Fig.~\ref{fig:pcrb_contrast} validates the PCRB under nominal conditions. The phase efficiency ratio $\eta_\theta \triangleq \mathrm{RMSE}/\sqrt{\mathrm{PCRB}} = 2.33$ contrasts sharply with $\eta_{\dot{R}} = 1.01$ for range rate; the remaining states also achieve $\eta \approx 1.0$ (Table~\ref{tab:efficiency}). This contrast isolates the TASD cross-epoch structure as the source of EKF suboptimality: the filter updates only $\theta_k$ and discards information about $\theta_{k-1}$ in the Doppler residual.

% FIX #4: R/b identifiability explanation
\begin{table}[!t]
\centering
\caption{EKF Efficiency Under Nominal Conditions}
\label{tab:efficiency}
\begin{tabular}{lccc}
\toprule
State & $\sqrt{\mathrm{PCRB}}$ & RMSE & $\eta$ \\
\midrule
$R$ [m]          & 9.95  & 9.95  & 1.00 \\
$\dot{R}$ [m/s]  & 0.710 & 0.714 & 1.01 \\
$b$ [m]          & 9.95  & 9.95  & 1.00 \\
$u$ [m/s]        & 0.708 & 0.709 & 1.00 \\
$\theta$ [rad]   & 21.2  & 49.5  & 2.33 \\
\bottomrule
\end{tabular}
\end{table}

Note that ToA observes only $R_k + b_k$, so the marginal PCRB for $R$ and $b$ individually remains near the prior level ($\sqrt{P_0} = 10$~m), consistent with the single-link gauge ambiguity.

\subsection{Mechanism-Dependent Robustness}

Fig.~\ref{fig:robust} compares all four estimators across both outlier mechanisms.

\textbf{Impulsive slips} [Fig.~\ref{fig:robust}(a)]: The p95 phase error reduces from 1406~rad (EKF) to 97~rad (Gating), 771~rad (Huber), and 98~rad (Hybrid), a 93\% reduction. Hard rejection is essential for sparse, extreme outliers. Huber's soft weighting inflates the effective noise by only $14\times$ at $300\sigma$, leaving an effective residual of ${\sim}21\sigma$ that still dominates the update.

\textbf{Heavy-tail contamination} [Fig.~\ref{fig:robust}(b), hatched bars]: Gating achieves p95 of 142~rad, Huber 184~rad, and Hybrid 139~rad (27\% reduction relative to the EKF's 191~rad). The gate rejects the worst contaminated measurements while Huber downweights moderate outliers in the $1.5\sigma$--$4\sigma$ range.

The Hybrid method provides near-best performance across both regimes without requiring prior knowledge of the dominant outlier mechanism.

\subsection{Single-Trial Behavior}

% FIX #11: "representative" in caption
\begin{figure}[!t]
\centering
\includegraphics[width=\columnwidth]{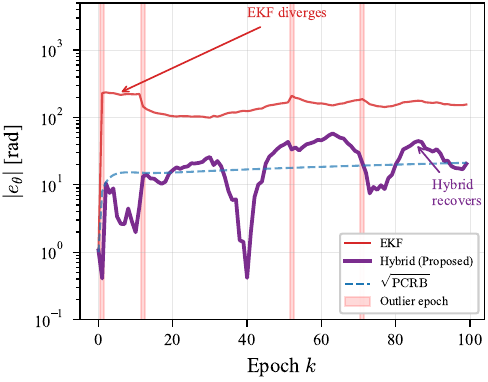}
\caption{A representative single-trial phase error trajectory under impulsive outliers. Red shading marks Doppler outlier epochs. The standard EKF diverges after the first cycle-slip event at $k \approx 3$ and does not recover. The Hybrid method rejects the outliers via hard gating and tracks $\theta$ within $1$--$2\times$ the $\sqrt{\mathrm{PCRB}}$ floor between outlier events.}
\label{fig:trajectory}
\end{figure}

Fig.~\ref{fig:trajectory} shows a representative trial under impulsive outliers. The EKF absorbs the first cycle-slip at $k \approx 3$ and sustains a permanent ${\sim}200$~rad bias. The Hybrid's hard gate rejects the corrupted measurements, keeping phase error near the $\sqrt{\mathrm{PCRB}}$ floor between outlier events.

% =============================================================================
% SECTION VI: CONCLUSION
% =============================================================================
\section{Conclusion}

This paper presented a TASD-aware estimation framework for LEO ISL synchronization. The coupling coefficient $\kappa_\theta$ was shown to be necessary to avoid unbounded phase uncertainty (Proposition~\ref{prop:tasd}), with $\kappa_\theta = 0$ leading to linear divergence of the phase PCRB. The PCRB via the $10 \times 10$ cross-epoch information structure provides a validated lower bound, with zero violation rate across all steady-state epochs under the nominal Gaussian noise model. The efficiency contrast between $\eta_\theta = 2.33$ and $\eta_{\dot{R}} = 1.01$ isolates the TASD cross-epoch structure as the source of the EKF's suboptimality for phase estimation. The hybrid robust method reduces p95 phase error by 27--93\% across outlier regimes.

% FIX #8: Watson 2020 citation
% FIX #9: PCRB misspecification caveat
The current analysis assumes a single link with time-invariant parameters. Under outlier-corrupted Doppler, the Gaussian-assumed PCRB is no longer a strict lower bound. The gap $\eta_\theta \approx 2.3$ suggests that fixed-lag smoothing could yield further gains. Extensions to constellation-scale distributed PCRB and more principled robust methods~\cite{watson2020robust} are left for future work.

% =============================================================================
% REFERENCES
% =============================================================================
\bibliographystyle{IEEEtran}
\bibliography{references}

@article{tang2018isl,
  author  = {Tang, Chengpan and Hu, Xiaogong and Zhou, Shanshi and Liu, Li
             and Pan, Junyang and Chen, Liucheng and Guo, Rui and Zhu, Lingfeng
             and Hu, Guangming and Li, Xiaojie and He, Feng and Chang, Zhiqiao},
  title   = {Initial Results of Centralized Autonomous Orbit Determination
             of the New-Generation {BDS} Satellites with Inter-Satellite
             Link Measurements},
  journal = {Journal of Geodesy},
  volume  = {92},
  number  = {10},
  pages   = {1155--1169},
  year    = {2018},
  doi     = {10.1007/s00190-018-1113-7}
}

@article{xie2020isl,
  author  = {Xie, Xin and Geng, Tao and Zhao, Qile and Lv, Yifei
             and Cai, Hongliang and Liu, Jingnan},
  title   = {Orbit and Clock Analysis of {BDS-3} Satellites Using
             Inter-Satellite Link Observations},
  journal = {Journal of Geodesy},
  volume  = {94},
  number  = {7},
  pages   = {64},
  year    = {2020},
  doi     = {10.1007/s00190-020-01394-4}
}

@article{ruan2020isl,
  author  = {Ruan, Rengui and Jia, Xiaolin and Feng, Laiping
             and Zhu, Jun and Huyan, Zongbo and Li, Jun and Wei, Ziqing},
  title   = {Orbit Determination and Time Synchronization for {BDS-3}
             Satellites with Raw Inter-Satellite Link Ranging Observations},
  journal = {Satellite Navigation},
  volume  = {1},
  pages   = {8},
  year    = {2020},
  doi     = {10.1186/s43020-020-0008-y}
}

@article{hauschild2021leo,
  author  = {Hauschild, Andr\'{e} and Montenbruck, Oliver},
  title   = {Precise Real-Time Navigation of {LEO} Satellites Using
             {GNSS} Broadcast Ephemerides},
  journal = {Navigation},
  volume  = {68},
  number  = {2},
  pages   = {419--432},
  year    = {2021},
  doi     = {10.1002/navi.416}
}

@article{gu2020sync,
  author  = {Gu, Xiao and Zhou, Guoqing and Li, Jie and Xie, Song},
  title   = {Joint Time Synchronization and Ranging for a Mobile
             Wireless Network},
  journal = {IEEE Communications Letters},
  volume  = {24},
  number  = {10},
  pages   = {2363--2366},
  year    = {2020},
  doi     = {10.1109/LCOMM.2020.3001138}
}

@article{gu2024lstm,
  author  = {Gu, Xiaobo and Qiu, Zeyang and Wang, Yanjiao and Jiang, Wei},
  title   = {{LSTM}-Based Clock Synchronization for Satellite Systems
             Using Inter-Satellite Ranging Measurements},
  journal = {GPS Solutions},
  volume  = {28},
  number  = {3},
  pages   = {147},
  year    = {2024},
  doi     = {10.1007/s10291-024-01684-w}
}

@article{tichavsky1998pcrb,
  author  = {Tichavsk\'{y}, Petr and Muravchik, Carlos H. and Nehorai, Arye},
  title   = {Posterior {Cram\'{e}r-Rao} Bounds for Discrete-Time
             Nonlinear Filtering},
  journal = {IEEE Transactions on Signal Processing},
  volume  = {46},
  number  = {5},
  pages   = {1386--1396},
  year    = {1998},
  month   = may,
  doi     = {10.1109/78.668800}
}

@book{vantrees2007bayesian,
  author    = {Van Trees, Harry L. and Bell, Kristine L.},
  title     = {Bayesian Bounds for Parameter Estimation and Nonlinear
               Filtering/Tracking},
  publisher = {Wiley-IEEE Press},
  year      = {2007},
  isbn      = {978-0-470-12095-8}
}

@article{simandl2001pcrb,
  author  = {\v{S}imandl, Miroslav and Kr\'{a}lovec, Jakub
             and Tichavsk\'{y}, Petr},
  title   = {Filtering, Predictive, and Smoothing {Cram\'{e}r-Rao} Bounds
             for Discrete-Time Nonlinear Dynamic Systems},
  journal = {Automatica},
  volume  = {37},
  number  = {11},
  pages   = {1703--1716},
  year    = {2001},
  doi     = {10.1016/S0005-1098(01)00136-4}
}

@article{closas2009crb,
  author  = {Closas, Pau and Fern\'{a}ndez-Prades, Carles
             and Fern\'{a}ndez-Rubio, Juan A.},
  title   = {{Cram\'{e}r-Rao} Bound Analysis of Positioning Approaches
             in {GNSS} Receivers},
  journal = {IEEE Transactions on Signal Processing},
  volume  = {57},
  number  = {10},
  pages   = {3775--3786},
  year    = {2009},
  month   = oct,
  doi     = {10.1109/TSP.2009.2025083}
}

@article{huber1964robust,
  author  = {Huber, Peter J.},
  title   = {Robust Estimation of a Location Parameter},
  journal = {The Annals of Mathematical Statistics},
  volume  = {35},
  number  = {1},
  pages   = {73--101},
  year    = {1964},
  doi     = {10.1214/aoms/1177703732}
}

@book{huber2009robust,
  author    = {Huber, Peter J. and Ronchetti, Elvezio M.},
  title     = {Robust Statistics},
  edition   = {2nd},
  publisher = {John Wiley \& Sons},
  year      = {2009},
  doi       = {10.1002/9780470434697}
}

@inproceedings{sunderhauf2012switchable,
  author    = {S\"{u}nderhauf, Niko and Protzel, Peter},
  title     = {Switchable Constraints for Robust Pose Graph {SLAM}},
  booktitle = {Proc. IEEE/RSJ Int. Conf. Intelligent Robots and Systems (IROS)},
  pages     = {1879--1884},
  year      = {2012},
  doi       = {10.1109/IROS.2012.6385590}
}

@article{watson2020robust,
  author  = {Watson, Ryan M. and Gross, Jason N. and Taylor, Clark N.
             and Leishman, Robert C.},
  title   = {Enabling Robust State Estimation Through Measurement Error
             Covariance Adaptation},
  journal = {IEEE Transactions on Aerospace and Electronic Systems},
  volume  = {56},
  number  = {3},
  pages   = {2026--2040},
  year    = {2020},
  month   = jun,
  doi     = {10.1109/TAES.2019.2941103}
}

@article{allan1966statistics,
  author  = {Allan, David W.},
  title   = {Statistics of Atomic Frequency Standards},
  journal = {Proceedings of the IEEE},
  volume  = {54},
  number  = {2},
  pages   = {221--230},
  year    = {1966},
  month   = feb,
  doi     = {10.1109/PROC.1966.4634}
}

@article{zucca2005clock,
  author  = {Zucca, Cristina and Tavella, Patrizia},
  title   = {The Clock Model and Its Relationship with the {Allan}
             and Related Variances},
  journal = {IEEE Transactions on Ultrasonics, Ferroelectrics,
             and Frequency Control},
  volume  = {52},
  number  = {2},
  pages   = {289--296},
  year    = {2005},
  month   = feb,
  doi     = {10.1109/TUFFC.2005.1406554}
}

@article{breitsch2020cycleslip,
  author  = {Breitsch, Brian and Morton, Y. Jade and Rino, Charles
             and Xu, Dongyang},
  title   = {{GNSS} Carrier Phase Cycle Slips Due to Diffractive
             Ionosphere Scintillation: Simulation and Characterization},
  journal = {IEEE Transactions on Aerospace and Electronic Systems},
  volume  = {56},
  number  = {5},
  pages   = {3632--3644},
  year    = {2020},
  doi     = {10.1109/TAES.2020.2979025}
}

@article{reid2018leo,
  author  = {Reid, Tyler G. R. and Neish, Andrew M. and Walter, Todd
             and Enge, Per K.},
  title   = {Broadband {LEO} Constellations for Navigation},
  journal = {Navigation},
  volume  = {65},
  number  = {2},
  pages   = {205--220},
  year    = {2018},
  doi     = {10.1002/navi.234}
}

@article{iannucci2022fused,
  author  = {Iannucci, Peter A. and Humphreys, Todd E.},
  title   = {Fused Low-{Earth}-Orbit {GNSS}},
  journal = {IEEE Transactions on Aerospace and Electronic Systems},
  volume  = {60},
  number  = {4},
  pages   = {3730--3749},
  year    = {2024},
  month   = aug,
  doi     = {10.1109/TAES.2022.3180000}
}

@article{khalife2022starlink,
  author  = {Khalife, Joe J. and Neinavaie, Mohammad and Kassas, Zaher M.},
  title   = {The First Carrier Phase Tracking and Positioning Results
             with {Starlink} {LEO} Satellite Signals},
  journal = {IEEE Transactions on Aerospace and Electronic Systems},
  volume  = {58},
  number  = {2},
  pages   = {1487--1491},
  year    = {2022},
  month   = apr,
  doi     = {10.1109/TAES.2021.3113880}
}

\end{document}